\definecolor{blue}{rgb}{0,0,1}
\definecolor{darkgreen}{rgb}{0,.5,0}
\definecolor{darkred}{rgb}{.5,0,0}
\newtheorem{theorem}{Theorem}
\newtheorem{ass}{Assumption}
\newcommand{\given}{\:|\:}
\def\Pr{{\mathrm{Pr}}}
\begin{document}

\title{Thermodynamic Properties of\\ Molecular Communication}

\author{
	\IEEEauthorblockN{Andrew W. Eckford$^1$, Benjamin Kuznets-Speck$^2$, Michael Hinczewski$^3$, and Peter J. Thomas$^4$}
	\IEEEauthorblockA{$^1$Dept. of EECS, York University, Toronto, ON, Canada\\
    $^{2,3}$ Dept. of Physics, Case Western Reserve University, Cleveland, OH, USA\\
    $^4$ Dept. of Math., Appl. Math., and Stats., Case Western Reserve University, Cleveland, OH, USA\\
    Emails: $^1$aeckford@yorku.ca, $^2$bdk48@case.edu, $^3$mxh605@case.edu $^4$pjthomas@case.edu}
	\thanks{This work was supported by a grant from the Natural Sciences and Engineering Research Council, and  by NSF grants DMS-1413770, DEB-1654989, DMS-1440386, and MCB-1651560.}}

\maketitle

\begin{abstract}
In this paper, we consider the energy cost of communicating using molecular communication. In a simplified scenario, we show that the energy bound in Landauer's principle can be achieved, implying that molecular communication can approach fundamental thermodynamic limits. 
\end{abstract}

\section{Introduction}

In molecular communication, chemical principles are used to communicate: for example, a message may be carried by molecules diffusing in a medium, propagating via Brownian motion to a receiver \cite{pierobon10,Farsad16}.
Various forms of molecular communication are found in nature, such as signal transduction;
these mechanisms are widely used and are central to the function of many biological systems.

Since microorganisms expend significant energy to transmit and process information, information-theoretic analysis of molecular communication \cite{Farsad17,Rose16,Thomas16,Sakkaff17} may be used to analyze the efficiency of biological processes.
Microscopic information processing systems have also been considered from a thermodynamic perspective. Mutual information and thermodynamic entropy production have been shown to be related for information transfer via Brownian motion \cite{Allahverdyan09}. This concept has been generalized to the ``learning rate'' (time derivative of conditional Shannon entropy), which is related to energy consumption in bipartite signaling networks \cite{Barato14}. In a different setting, a thermodynamic analysis of molecular communication was performed in \cite{Pierobon13}.


In this paper we are motivated by Landauer's principle \cite{Landauer61}, which gives the minimum free energy cost in order to erase one bit of information.  Thus any information processing system that involves erasure as part of its mechanism will need a free energy input in order to operate.  Landauer's principle has been analyzed in computation \cite{Bennett03} and biological information processing \cite{Kempes17}, and has been validated in experiments \cite{Berut12}.

To analyze this phenomenon in molecular communication, we consider a ``minimal'' molecular communication system: two reservoirs are held at different concentrations, and binary transmissions are composed by selecting molecules from the reservoirs. The free energy stored in the reservoirs is expressed in terms of the {\em chemical potential} between reservoirs at different concentrations. Our main result (Theorem 1) shows that the Landauer principle gives an achievable lower bound on the free energy required to create our molecular communication transmitter; equivalently, we show that the reciprocal of the Landauer energy is an achievable upper bound on the capacity per unit energy cost (cf. \cite{verdu90}).

\section{Physical Model and Concepts}

\subsection{Chemical potential}

Throughout this paper, we express mixtures of solvent and solute in terms of mole fraction $c$. That is, $c$ is the ratio of the number of solute molecules to the total number of molecules in the solution. 

Consider a closed container, divided into two reservoirs. The reservoirs contain solutions of a given solute at different mole fractions: the two mole fractions are a {\em low} mole fraction $c_L$ and a {\em high} mole fraction $c_H$, where $0<c_L \leq c_H$. We refer to these as the low and high reservoirs, respectively. 

The {\em chemical potential} difference between the reservoirs gives the free energy required to move a single solute molecule from one to the other. 
We assume the reservoirs contain {\em ideal solutions}, mixtures of solute and solvent where the interaction energies between neighboring particles are independent of particle type.  Under this assumption the chemical potential difference from the low to the high reservoir is a function of the solute mole fractions \cite{AtkinsPhysChem},
%
%
\begin{align}
	\mu&= 
	\label{eqn:PerMoleculePotential}
    kT \log \frac{c_H}{c_L} ,
\end{align}
where $k$ is Boltzmann's constant and $T$ the absolute temperature.  When $c_L < c_H$ the potential $\mu$ going from low to high is positive.  The reverse potential (from high to low) is $- \mu$; moving a solute particle from high to low decreases free energy, while moving from low to high increases free energy.


\subsection{Landauer's principle}

The Maxwell's Demon thought experiment \cite{LeffRexBook} postulates a being with the ability to sense and manipulate individual atoms in a gas. Measuring the velocity of each atom, the being opens a trapdoor at just the right times to capture ``hot" atoms (moving faster than average) on one side, and ``cold" atoms (moving slower than average) on the other (Fig. \ref{fig:MaxwellsDemon}). If the trapdoor were designed so as to require no net energy to operate, then the thermodynamic entropy of the system could be reduced with no energy cost; this would be an apparent violation of the second law of thermodynamics.

\begin{figure}
\begin{center}
\includegraphics[width=3in]{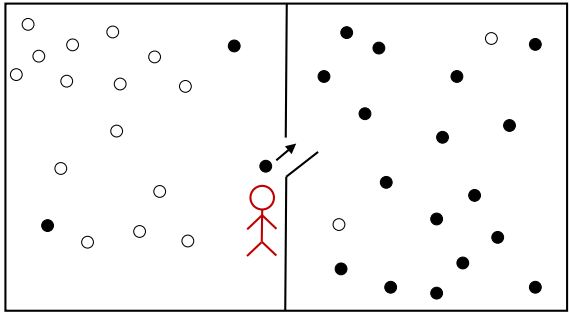}
\end{center}
\caption{\label{fig:MaxwellsDemon} A depiction of the Maxwell's Demon thought experiment. The demon (stick figure) can monitor the atoms in a gas (filled circles are moving faster than average, unfilled are moving slower than average), and can operate a trapdoor. If it detects a fast-moving atom near the trapdoor moving to the right, it opens the trapdoor, allowing the atom to pass through the barrier (and similarly for slow-moving atoms going left).  Thus, the temperature on the right of the partition increases, and on the left decreases, seemingly violating the second law of thermodynamics.}
\end{figure}

A more careful analysis, however, rescues the second law by considering the thermodynamics of the demon and the physical nature of information itself \cite{Parrondo15}.  In a landmark work, Landauer \cite{Landauer61} showed that while the demon could in principle carry out the necessary measurements (i.e. analyzing particle trajectories) without costing free energy, the resulting measurement information must be stored in a physical memory register for some finite amount of time. If we imagine the demon operating indefinitely, this information must eventually be erased, since any physical memory register has a limited storage capacity.  The {\em erasure} of one {\em bit} of information costs at least
\begin{equation}
	G_\ell = k T \log 2
\end{equation}
Joules of free energy, where $\log$ is the natural logarithm.

If information is measured in {\em nats}, Landauer's principle becomes
\begin{equation}
	\label{eqn:LandauerEnergy}
	G_\ell = k T
\end{equation}
joules per nat (since there are $\log 2$ nats per bit). We refer to $kT$ as the {\em Landauer energy}.

\section{Communication model}

\subsection{Minimal Molecular Communication Channel}

%
%
Using the low and high reservoirs described in the previous section,
suppose the total numbers of molecules in the low and high reservoirs are $n_L$ and $n_H$, respectively.
The average mole fraction across the entire container (counting low and high reservoirs together) is
\begin{equation}
	c = \frac{n_L c_L + n_H c_H}{n_L + n_H} .
\end{equation}
Moreover, let $n = n_L+n_H$ represent the total number of molecules.

Consider the following simple molecular communication system, depicted in Fig. \ref{fig:CommunicationSystem}:
\begin{itemize}
	\item {\em Transmitter:} The transmitter sends a single bit $X \in \{0,1\}$ as follows: if $X = 1$, the selector picks a molecule from the high reservoir; if $X = 0$, the selector picks a molecule from the low reservoir. {\em The molecule might be either solvent or solute.}
	\item {\em Receiver:} The receiver detects whether the selected molecule is solvent or solute, and forms $Y \in \{0,1\}$: $Y = 0$ if it is solvent, and $Y = 1$ if it is solute. After detection, the receiver consumes the molecule. The molecules are selected independently of their identity (solute or solvent).
    \item The total number of molecules $n$ (solute + solvent) decreases by one with every use of the channel; we neglect this effect since $n$ is very large, so $c_L$ and $c_H$ remain the same after each channel use.
\end{itemize}
We refer to this as the {\em minimal molecular communication} (MMC) channel.

\begin{figure}
\begin{center}
\includegraphics[width=3.5in]{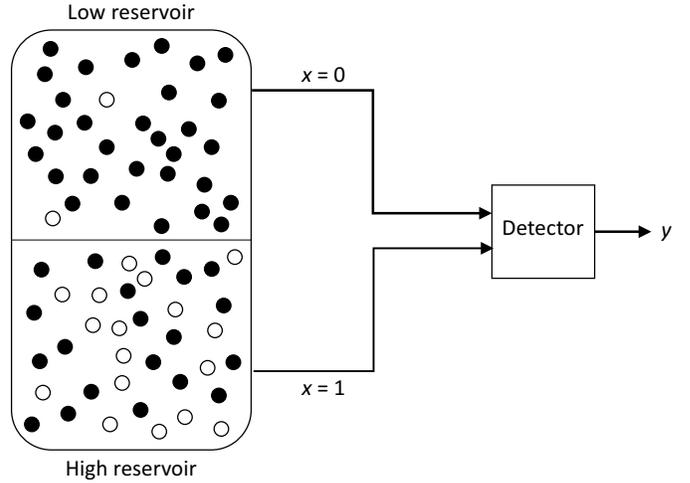}
\end{center}
\caption{\label{fig:CommunicationSystem} A depiction of the communication system. Solvent molecules are filled circles, while solute molecules are unfilled circles. The transmitter selects a single molecule from the low ($x=0$) or high ($x=1$) reservoir, and provides it to the detector; the detector decides $y = 0$ if the molecule is solvent, or $y = 1$ if the molecule is solute. For ease of illustration, Assumption \ref{ass:SmallC} does not apply to this figure.}
\end{figure}

A {\em single channel use} consists of the selection, and detection, of a single molecule.
Since we are using mole fractions, we have:
\begin{equation}
	p(y \given x = 0) = 
	\left\{
		\begin{array}{cl}
			1-c_L, & y = 0 \\
			c_L, & y = 1
		\end{array}
	\right.
\end{equation}
and
\begin{equation}
	p(y \given x = 1) = 
	\left\{
		\begin{array}{cl}
			1-c_H, & y = 0 \\
			c_H, & y = 1 .
		\end{array}
	\right.
\end{equation}
Throughout this paper we make the following assumption.
\begin{ass}
	\label{ass:SmallC}
	Solute mole fractions are small, i.e., $c_L,c_H \ll 1$.
\end{ass}

\subsection{Information rate per channel use}

In each channel use, the channel is a memoryless (but asymmetric) binary channel (Fig. \ref{fig:BinaryChannel}). 
Let $p_L = \Pr(X = 0)$, $(1-p_L) = \Pr(X=1)$, and 
\begin{equation}
	w = p_L c_L + (1-p_L) c_H .
\end{equation} 
Note that $w$ (the average mole fraction in a {\em codeword}) is distinct from $c$ (the
latter is the average mole fraction in the {\em container}).
Then
\begin{align}
	p(y) &= p_L p(y \given x = 0) + (1-p_L) p(y \given x = 1) \\
	&=\left\{
		\begin{array}{cl}
			1-w, & y = 0 \\
			w, & y = 1
		\end{array}
	\right.
\end{align}
%
%
Finally
\begin{align}
	\label{eqn:MutualInfo}
	I(X;Y) &= \mathscr{H}(w) - p_L \mathscr{H}(c_L) - (1-p_L) \mathscr{H}(c_H) ,
\end{align}
where $\mathscr{H}(\cdot)$ is the binary entropy function (using the natural logarithm in this case). 

\begin{figure}
\begin{center}
\includegraphics[width=2.5in]{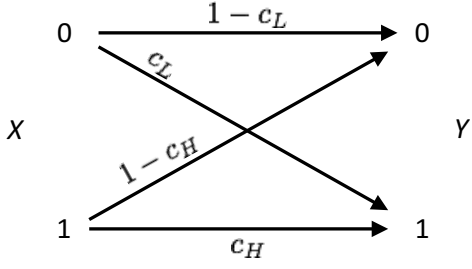}
\end{center}
\caption{\label{fig:BinaryChannel} A depiction of the MMC channel as a binary asymmetric channel.}
\end{figure}

The binary entropy function can be written
\begin{equation}
	\mathscr{H}(p) = p - p \log p - o(p^2) ,
\end{equation}
so $\mathscr{H}(p) = p - p \log p$ is asymptotically accurate when $p$ is small.
Using Assumption \ref{ass:SmallC}, the {\em total} information that can be communicated in $n$ channel uses is
\begin{equation}
	\label{eqn:MutualInfoSmall}
	I = n\Big(p_L c_L \log c_L + (1-p_L) c_H \log c_H - w \log w \Big) .
\end{equation}
%




\section{Energy per unit information}








\subsection{Free energy required to create the reservoirs}

Suppose we start in a solution at initial mole fraction $c$, and we fill both the low and high reservoirs from this solution (so that, initially, $c_L = c_H = c$). 
To implement the communication system we just described, we want to move solute molecules from low to high reservoir, until the final mole fractions are $c_L < c$ and $c_H > c$. The chemical potential (\ref{eqn:PerMoleculePotential}) implies that this operation requires free energy; thus, we can determine how much free energy it takes to create our communication system, and ultimately how many joules per nat are required to create the transmitter.

Starting at initial mole fraction $c$, the total number of solute molecules moved $m$ is 
\begin{equation}
	\label{eqn:M}
	m = n_L(c - c_L) = n_H(c_H - c) .
\end{equation}
Supposing we move $\Delta m$ molecules at a time, this operation requires $m/\Delta m$ moves.  In the derivation below, we choose $\Delta m \ll n_H, n_L$, which allows us to set up a Riemann integral.  By splitting up the operation into many small moves, we also implicitly assume that the whole process happens slowly enough that the two reservoirs each remain instantaneously at equilbrium at all times.  The resulting free energy that we derive is thus the smallest possible amount of work needed to create the transmitter.  Any faster, non-equilibrium alternative procedure will necessarily require more work. 

%
%

%
%

Assumption \ref{ass:SmallC} implies that $n_L$ and $n_H$ are much larger than the total number of solute molecules in each reservoir (since $n_L$ and $n_H$ include both solute and solvent). From this assumption, it follows that, e.g., $n_L - m \simeq n_L$, so that the moves of solute molecules do not significantly change $n_L$ (nor, by a similar argument, $n_H$); thus, we treat $n_L$ and $n_H$ as constant.

Initially both reservoirs are at mole fraction $c$, so $\mu = 0$ (from (\ref{eqn:PerMoleculePotential})); i.e., no energy is required for the first move. However, once the first move is complete, the mole fractions of the high and low reservoirs are
\begin{align}
	c_{H,1} &= c + \frac{\Delta m}{n_H} \\
	c_{L,1} &= c - \frac{\Delta m}{n_L} .
\end{align}
%
%
%
%
After the first move, the chemical potential (per molecule) is (from (\ref{eqn:PerMoleculePotential}))
\begin{align}
	\mu_1 
	&= k T \log \frac{c_{H,1}}{c_{L,1}} \\
	\label{eqn:E1-initial}
	&= kT \log \frac{c + \Delta m / n}{c - \Delta m / n} 
\end{align}
so that the free energy required to move an additional $\Delta m$ molecules is
\begin{equation}
	G_1 = \Delta m \: kT \log \frac{c + \Delta m / n_H}{c - \Delta m / n_L} .
\end{equation}
It follows that the energy required for the $j$th move is
\begin{equation}
	G_j = \Delta m \: kT \log \frac{c + j \Delta m / n_H}{c - j \Delta m / n_L} ,
\end{equation}
and the total energy required over all $m/\Delta m$ moves is 
\begin{align}
	G &= \sum_{j=0}^{\frac{m}{\Delta m} - 1} G_j \\
	&= \sum_{j=0}^{\frac{m}{\Delta m}-1} \Delta m \: kT \log \frac{c + j \Delta m / n_H}{c - j \Delta m / n_L} \\
	&= n_H kT \sum_{j=0}^{\frac{m}{\Delta m}-1}\log \frac{c + j \Delta m / n_H}{c - j \Delta m / n_L}\:\frac{\Delta m}{n_H}
\end{align}
Substituting $\Delta c = \frac{\Delta m}{n_H}$ as the change in mole fraction in the high reservoir,
\begin{equation}
	\label{eqn:EnergySum}
	G = n_H kT \sum_{j=0}^{\frac{m}{\Delta m}-1}\log \frac{c + j \Delta c}{c - \frac{n_H}{n_L} j \Delta c}\:\Delta c .
\end{equation}
As $\Delta c \rightarrow 0$, (\ref{eqn:EnergySum}) becomes a Riemann integral. Replacing $j \Delta c$ with
$\chi$ as the variable of integration,
\begin{align}
	\nonumber \lefteqn{\frac{G}{n_H kT} = \int_{\chi = 0}^{c_H-c} \log \frac{c + \chi}{c - \frac{n_H}{n_L}\chi} d\chi}&\\
	\nonumber
	&= \left((c + \chi)\log(c+\chi) - \chi\right) \Big|_{\chi = 0}^{c_H - c} \\
	&\:\:\:\: - 
	\left(- \left(\frac{n_L}{n_H}c - \chi\right)\log \left(c - \frac{n_H}{n_L}\chi\right) - \chi \right) \Big|_{\chi = 0}^{c_H-c} \\
	\label{eqn:E2}
	&= - \left(1 + \frac{n_L}{n_H}\right) c \log c + c_H \log c_H + \frac{n_L}{n_H} c_L \log c_L .
\end{align}
%
%
To normalize the expression in (\ref{eqn:E2}), we divide it by $1 + \frac{n_L}{n_H}$ and obtain
%
\begin{align}
	\nonumber \lefteqn{\frac{G}{n_HkT(1+\frac{n_L}{n_H})} = \frac{G}{kT(n_H + n_L)}}&\\ 
	\label{eqn:Energy-Different}
	&= \frac{n_H}{n_H + n_L} c_H \log c_H + \frac{n_L}{n_H + n_L} c_L \log c_L - c \log c .
\end{align}
Since $n_H+n_L = n$, the total number of molecules in the container,
$\frac{G}{n_H+n_L}$ (easily obtained from (\ref{eqn:Energy-Different})) gives the free energy per molecule to establish the communication system.

\subsection{Capacity per unit energy}

To make the notation more compact, let
$m_L = \frac{n_L}{n_H+n_L}$
represent the fraction of total molecules in the low reservoir, and let 
\begin{equation}
	\phi(p) = \left\{ \begin{array}{cl} 0, & p = 0\\ p \log p, & p > 0 \end{array} \right. 
\end{equation}
represent the partial entropy function.
Recalling $n = n_L + n_H$, (\ref{eqn:Energy-Different}) can be written
\begin{equation}
	\label{eqn:Energy-Different-m0}
	G = n kT \Big( m_L \phi(c_L) + (1-m_L) \phi(c_H) - \phi(c) \Big).
\end{equation}
%
%
%
If $m_L = p_L$, we can write
\begin{equation}
    \label{eqn:fraction}
	\frac{G}{I} =
	kT \frac{m_L \phi(c_L) + (1-m_L) \phi(c_H) - \phi(c)}
		{p_L \phi(c_L) + (1-p_L) \phi(c_H) - \phi(w)},
\end{equation}
and since $c = w$ when $m_L = p_L$,
\begin{equation}
	\label{eqn:kT}
    \frac{G}{I} = kT
\end{equation}
joules per nat.
That is, {\em the energy required to create the transmitter is equal to the Landauer energy}.

However, (\ref{eqn:fraction}) does not apply if $m_L \neq p_L$: 
for a sufficiently long codeword, one reservoir will run out before the other; the remaining molecules in the other reservoir can't be used to send information.
That is, the reservoir is {\em mismatched to the codebook}.

\begin{figure}
\begin{center}
\includegraphics[width=3.5in]{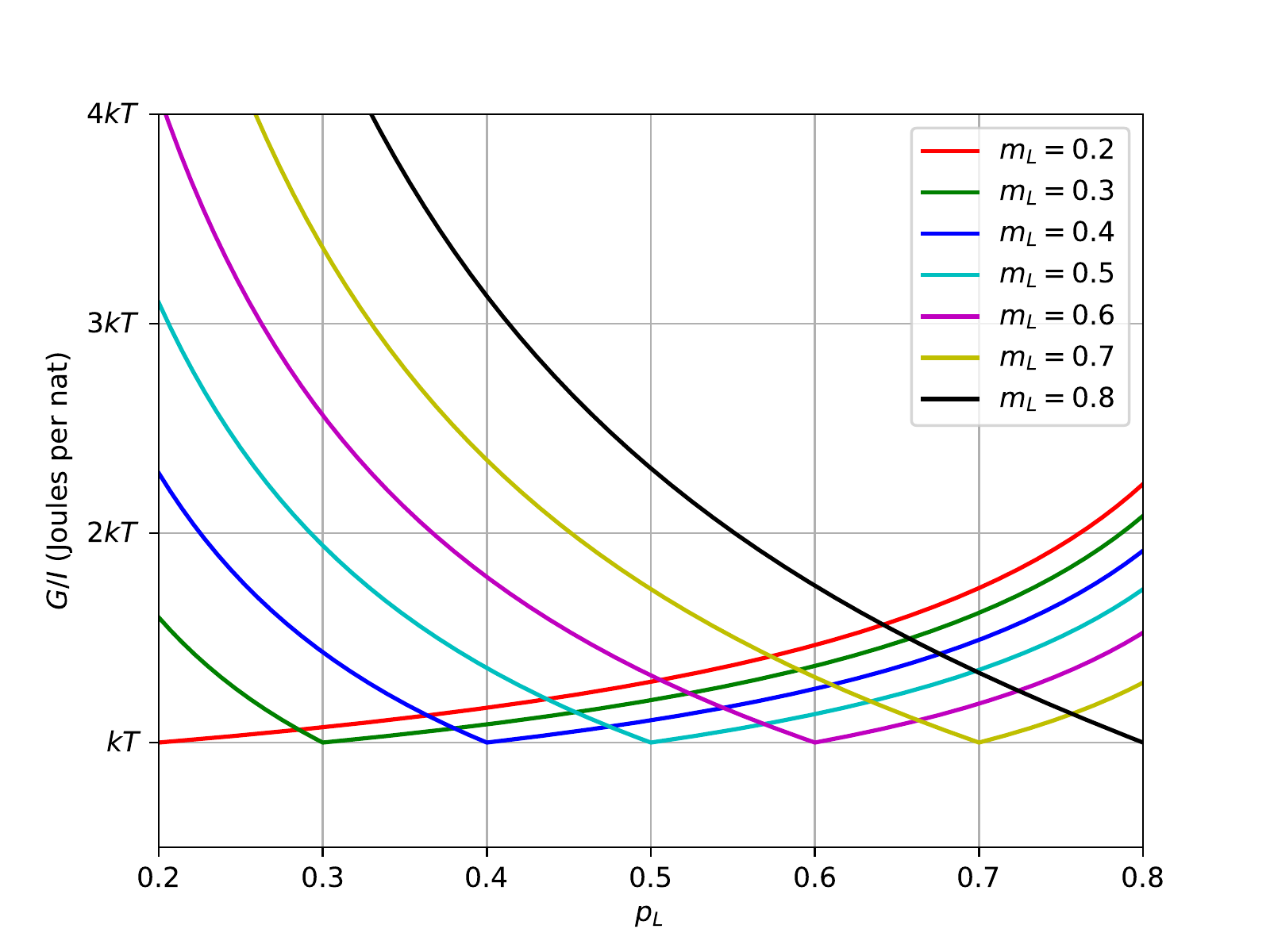}
\end{center}
\caption{\label{fig:EnergyCapacityResult} Plot of $G/I$ versus $p_L$ for various values of $m_L$, showing that the minimum $G/I = kT$ occurs at $m_L = p_L$. Where $m_L \neq p_L$, $G/I$ is given by (\ref{eqn:AdjustedCE1}) and (\ref{eqn:AdjustedCE2}).}
\end{figure}

Nonetheless, the Landauer energy is a lower bound on $G/I$, as we show 
in the main result in this paper:
\begin{theorem}
	For the MMC system, the minimum energy per nat is $kT$, achieved when $m_L = p_L$.
\end{theorem}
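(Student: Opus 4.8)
The plan is to split into the matched case $m_L=p_L$ and the two mismatched cases $m_L<p_L$ and $m_L>p_L$, and in each case reduce the claim $G/I\ge kT$ to convexity of the partial-entropy function $\phi(p)=p\log p$ on $(0,\infty)$. The matched case needs no work beyond what is already in the text: when $m_L=p_L$ we have $c=w$, so the numerator and denominator of (\ref{eqn:fraction}) coincide and $G/I=kT$ exactly; hence the value $kT$ is attained, for every $p_L\in(0,1)$ and every $c_L<c_H$.

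Next I would write down the mismatch-adjusted ratio. If $m_L<p_L$, a codeword using a fraction $p_L$ of low selections exhausts the low reservoir after $N=n m_L/p_L$ channel uses, the surplus in the high reservoir carrying no further information; using (\ref{eqn:MutualInfoSmall}) in the form $I=N\big(p_L\phi(c_L)+(1-p_L)\phi(c_H)-\phi(w)\big)$ together with (\ref{eqn:Energy-Different-m0}) gives
\begin{equation}
	\label{eqn:AdjustedCE1}
	\frac{G}{I}=kT\,\frac{p_L\big(m_L\phi(c_L)+(1-m_L)\phi(c_H)-\phi(c)\big)}{m_L\big(p_L\phi(c_L)+(1-p_L)\phi(c_H)-\phi(w)\big)},
\end{equation}
while the symmetric computation for $m_L>p_L$ (the high reservoir now empties first, after $N=n(1-m_L)/(1-p_L)$ uses) gives
\begin{equation}
	\label{eqn:AdjustedCE2}
	\frac{G}{I}=kT\,\frac{(1-p_L)\big(m_L\phi(c_L)+(1-m_L)\phi(c_H)-\phi(c)\big)}{(1-m_L)\big(p_L\phi(c_L)+(1-p_L)\phi(c_H)-\phi(w)\big)}.
\end{equation}

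The heart of the argument is to show the right-hand side of (\ref{eqn:AdjustedCE1}) is at least $kT$. Writing $A$ and $B$ for the parenthesized numerator and denominator of (\ref{eqn:AdjustedCE1}), both are positive (by convexity of $\phi$ together with $c=m_Lc_L+(1-m_L)c_H$ and $w=p_Lc_L+(1-p_L)c_H$), so the claim is equivalent to $p_LA\ge m_LB$. Expanding, the $\phi(c_L)$ terms cancel and one is left with $p_LA-m_LB=(p_L-m_L)\phi(c_H)+m_L\phi(w)-p_L\phi(c)$; dividing by $p_L$, the weights $(p_L-m_L)/p_L$ and $m_L/p_L$ are nonnegative and sum to $1$, and a short check gives $\frac{p_L-m_L}{p_L}c_H+\frac{m_L}{p_L}w=m_Lc_L+(1-m_L)c_H=c$, so $p_LA-m_LB=p_L\big(\tfrac{p_L-m_L}{p_L}\phi(c_H)+\tfrac{m_L}{p_L}\phi(w)-\phi(c)\big)\ge0$ by Jensen's inequality for the convex map $\phi$, with equality only if $c_H=w$ (i.e.\ $p_L=0$) or $m_L=p_L$. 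Case (\ref{eqn:AdjustedCE2}) is handled identically, with $c_L$ and $c_H$ (and $p_L$ and $1-p_L$) interchanged, so now the $\phi(c_H)$ terms cancel and $c$ reappears as a convex combination of $c_L$ and $w$. Collecting the equality conditions across the three cases shows that for nontrivial parameters ($0<c_L<c_H$, $0<p_L<1$) one has $G/I=kT$ precisely when $m_L=p_L$, which is the theorem.

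I expect the only real obstacle to be the algebraic bookkeeping in the previous paragraph: recognizing that, after the cancellation, $p_LA-m_LB$ is exactly the two-point Jensen gap of $\phi$ evaluated at the point $c$. Once that identification is made, convexity of $p\log p$ (second derivative $1/p>0$) closes both mismatched cases and pins down the equality condition, and the matched case supplies attainability.
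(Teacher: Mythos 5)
Your proposal is correct, and it follows the paper's overall architecture exactly: the same three-way case split, the same exhaustion counts ($n_L/p_L$ and $n_H/(1-p_L)$ channel uses), and the same adjusted ratios (\ref{eqn:AdjustedCE1})--(\ref{eqn:AdjustedCE2}) (your numerator $m_L\phi(c_L)+(1-m_L)\phi(c_H)-\phi(c)$ is the one consistent with (\ref{eqn:Energy-Different-m0}); the paper's displayed versions swap $c_L$ and $c_H$ there, apparently a typo). Where you genuinely diverge is in how the key inequality is established. The paper reduces both mismatched cases to monotonicity of $J(c_L,c_H,p)/p$ and $J(c_L,c_H,p)/(1-p)$ in $p$, proved in the appendix by differentiating and analyzing the sign of the numerator via Taylor expansions of the logarithm --- which forces a separate argument when $(1-p_L)(c_H/c_L-1)\geq 1$ falls outside the radius of convergence. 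You instead prove only the two-point comparison that is actually needed, $p_L A \geq m_L B$ (equivalently $J(m_L)/m_L \geq J(p_L)/p_L$), by an algebraic cancellation that exhibits $p_LA-m_LB$ as $p_L$ times the Jensen gap of $\phi$ at $c$ written as a convex combination of $c_H$ and $w$ (and symmetrically of $c_L$ and $w$ in the other case); I checked the weights and the identity $\frac{p_L-m_L}{p_L}c_H+\frac{m_L}{p_L}w=c$, and they are right. Your route is more elementary (no calculus, no series, no convergence caveats), and strict convexity of $\phi$ hands you the equality condition $m_L=p_L$ directly, whereas the paper's Taylor-series argument establishes the stronger fact of monotonicity on the whole interval (and, incidentally, the paper's main text misstates the monotonicity directions relative to its own appendix --- another hazard your two-point argument sidesteps). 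The only loose ends, shared with the paper, are the degenerate boundaries $m_L\in\{0,1\}$, where $G/I$ is $0/0$, and the positivity of the denominator $B$, which you correctly dispatch by the same Jensen argument.
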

\begin{proof}
Achieving $kT$ at $m_L = p_L$ is clear from the development up to (\ref{eqn:kT}), so we focus on $m_L \neq p_L$.

First consider $m_L < p_L$, where the low reservoir runs out before the high reservoir. Let $n_H^\prime$ represent the number of molecules used from the high reservoir. Since $p_L = n_L / (n_L + n_H^\prime)$, then $n_H^\prime = n_L (1-p_L)/p_L$. As a function of $n_L$ and $p_L$, the total number of molecules {\em used} is $n_L + n_L (1-p_L)/p_L = n_L/p_L$; once this number is used, the low reservoir has run out (and further communication is impossible: we are no longer able to send the symbol 0, corresponding to the low reservoir).

When the reservoirs are created, the required energy is given by (\ref{eqn:Energy-Different-m0}). This can be rewritten
\begin{align}
	G &= \frac{n_L}{m_L} kT \Big( m_L \phi(c_L) + (1-m_L) \phi(c_H) - \phi(c) \Big) .
\end{align}
However when the reservoirs are used to communicate, if $m_L < p_L$ we can obtain
\begin{equation}
	I = \frac{n_L}{p_L} \Big( p_L\phi(c_L) + (1-p_L) \phi(c_H) - \phi(w) \Big) .
\end{equation}
Thus,
\begin{equation}
	\label{eqn:AdjustedCE1}
	\frac{G}{I} = kT \left(\frac{p_L}{m_L}\right) \frac{m_L \phi(c_H) + (1-m_L) \phi(c_L) - \phi(c)}
		{p_L \phi(c_L) + (1-p_L) \phi(c_H) - \phi(w)} .
\end{equation}

Similarly, if $m_L > p_L$, then the high reservoir runs out first, and the total number of molecules used is $n_H/(1-p_L)$.
Now we have
\begin{equation}
	\label{eqn:AdjustedCE2}
	\frac{G}{I} = kT \left(\frac{1-p_L}{1-m_L}\right) \frac{m_L \phi(c_H) + (1-m_L) \phi(c_L) - \phi(c)}
		{p_L \phi(c_L) + (1-p_L) \phi(c_H) - \phi(w)} .
\end{equation}

In the appendix, we show that
\begin{equation}
	\frac{p_L \phi(c_L) + (1-p_L) \phi(c_H) - \phi(w)}{p_L}
\end{equation}
is a monotonically increasing function in $p_L$, and
\begin{equation}
	\frac{p_L \phi(c_L) + (1-p_L) \phi(c_H) - \phi(w)}{1-p_L}
\end{equation}
is a monotonically decreasing function in $p_L$. Thus, in both (\ref{eqn:AdjustedCE1}) and (\ref{eqn:AdjustedCE2}), 
$\frac{G}{I} > kT$ for $p_L \neq m_L$, and the theorem follows.
\end{proof}


In Fig. \ref{fig:EnergyCapacityResult}, we illustrate $G/I$ using (\ref{eqn:AdjustedCE1}) and (\ref{eqn:AdjustedCE2}).

\section{Discussion and Conclusion}

In a simplified molecular communication system, this paper found that the minimum energy per unit information is given by the Landauer energy of $kT$ joules per nat. Thus, molecular communication is, in principle, an energy-efficient form of communication; this may partly explain why it is used in biological systems. 

It should be noted that many simplifying assumptions were used in developing this result, such as the assumption that molecules are transmitted perfectly to the receiver. If diffusion were used instead, this may have a deleterious effect on the system, potentially driving up the energy requirements. This will be considered in future work.

%

\appendix

%
%


For compact notation, let
\begin{align}
	\nonumber\lefteqn{J(c_L,c_H,p_L)}&\\
	& = p_L\phi(c_L) + (1-p_L)\phi(c_H) - \phi(p_L c_L + (1-p_L)c_H).
\end{align}
Here we show that
\begin{equation}
	\label{eqn:Monotonic-1}
	\frac{1}{p_L} J(c_L,c_H,p_L)
\end{equation}
is a monotonically decreasing function in $p_L$, and that
\begin{equation}
	\label{eqn:Monotonic-1a}
	\frac{1}{1-p_L} J(c_L,c_H,p_L)
\end{equation}
is a monotonically increasing function in $p_L$.

Starting with (\ref{eqn:Monotonic-1}), taking the first derivative, we have
\begin{align}
	\frac{d}{dp_L} \frac{J(c_L,c_H,p_L)}{p_L}  
	\label{eqn:Monotonic-2}
	&=\frac{p_L \frac{d}{dp_L}J(c_L,c_H,p_L) - J(c_L,c_H,p_L)}{p_L^2} 
\end{align}
where
\begin{align}
    \frac{d}{dp_L}J(c_L,c_H,p_L)
	&= \phi(c_L) - \phi(c_H) + (\log(w) + 1)(c_H - c_L) .
\end{align}
We are only interested in whether (\ref{eqn:Monotonic-2}) is positive or negative, which is determined by its numerator; the denominator is 
always positive and can be ignored. The numerator becomes
\begin{align}
	\nonumber \lefteqn{
    p_L \frac{d}{dp_L} J(c_L,c_H,p_L) - J(c_L,c_H,p_L)}&\\
	\label{eqn:Monotonic-3}
	&= - c_H \sum_{i=2}^\infty \frac{p_L^i \left(1-\frac{c_L}{c_H}\right)^i}{i}.
\end{align}
which follows from the Taylor series expansion of $\log(1-x)$.
Since $c_L < c_H$ by definition, the term in (\ref{eqn:Monotonic-3}) is always negative, and is always within the range of convergence of the Taylor series. Thus, (\ref{eqn:Monotonic-1}) is decreasing in $p_L$.

Returning to (\ref{eqn:Monotonic-1a}), we have
\begin{align}
	\nonumber \lefteqn{\frac{d}{dp_L} \frac{1}{1-p_L} J(c_L,c_H,p_L) } &\\
	\label{eqn:Monotonic-4}
	&=\frac{(1-p_L) \frac{d}{dp_L}J(c_L,c_H,p_L) + J(c_L,c_H,p_L)}{(1-p_L)^2}. 
\end{align}
The derivation is similar. The 
numerator in (\ref{eqn:Monotonic-4}) becomes
\begin{align}
	c_L(1-p_L)\left(\frac{c_H}{c_L}-1\right)
    - c_L\log \left( 1 + (1-p_L)\left(\frac{c_H}{c_L}-1\right)\right)
	\label{eqn:Monotonic-5}
\end{align}
which (using Taylor series) can be expanded to
\begin{align}
	\label{eqn:Monotonic-6}
	c_L \sum_{i=2}^\infty \frac{(1-p_L)^i \left(\frac{c_H}{c_L} - 1\right)^i}{i} (-1)^i ,
\end{align}
where (\ref{eqn:Monotonic-6}) is valid for $(1-p_L)\left(\frac{c_H}{c_L} - 1\right) < 1$. 
Using the Leibniz criterion for alternating series, it can be shown that 
(\ref{eqn:Monotonic-6}) is always positive. For $(1-p_L)\left(\frac{c_H}{c_L} - 1\right) \geq 1$, return to (\ref{eqn:Monotonic-5}), let $x = (1-p_L)\left(\frac{c_H}{c_L} - 1\right)$ and see that $c_L(x - \log(1+x)) > 0$ for all $x \geq 1$. Thus, (\ref{eqn:Monotonic-5}) is positive, and (\ref{eqn:Monotonic-1a}) is increasing in $p_L$.


\bibliographystyle{ieeetr}
\bibliography{ThermoMolCom}

\end{document}